\theoremstyle{plain}
\newtheorem{theorem}{Theorem}
\newtheorem{proposition}[theorem]{Proposition}
\theoremstyle{remark}
\newtheorem*{remark*}{Remark}
\newcommand\CC{{\mathbb C}}
\newcommand\RR{{\mathbb R}}
\newcommand\into{\int_\Omega}
\renewcommand\[{\begin{equation}}
\renewcommand\]{\end{equation}}
\newcommand\spr[1]{\langle#1\rangle}
\newcommand\hol{_{\text{hol}}}
\newcommand\ldvah{L^2_h}
\newcommand\ldvahh{L^2_{\text{hol},h}}
\newcommand\LL{{\mathcal L}}
\newcommand\HH{\mathcal H}
\newcommand\HHe{{\HH _\epsilon}}
\newcommand\intR{\int_\RR}
\renewcommand\Re{\operatorname{Re}}
\newcommand\tHHe{\widetilde{\HH}_\epsilon}
\newcommand\tKe{\widetilde K_\epsilon}
\newcommand\MMe{\mathcal M_\epsilon}
\newcommand\tKey{\widetilde K_{\epsilon,y}}
\newcommand\cD{\mathcal D}
\newcommand\WW{\mathcal W}
\newcommand\Kez{K_{\epsilon,z}}
\newcommand\Te{T^{(\epsilon)}}
\newcommand\tTe{\widetilde T^{(\epsilon)}}
\newcommand\e{\mathbf e}
\renewcommand\SS{\mathcal S}
\newcommand\oy{\overline y}
\newcommand\oz{\overline z}
\newcommand\FFe{\mathcal F_\epsilon}
\newcommand\intC{\int_{\CC}}
\begin{document}

\title[Hermite polynomials]{Hermite Polynomials and
 Quasi-classical Asymptotics}
\author{S.~Twareque Ali}
\address{Department of Mathematics and Statistics,
 Concordia University, Montr\'eal, Qu\'ebec, Canada~H3G~1M8}
\email{twareque.ali{@}concordia.ca}
\author{Miroslav~Engli\v s}
\address{Mathematics Institute, Silesian University in Opava,
 Na~Rybn\'\i\v cku~1, 74601~Opava, Czech Republic {\rm and}
 Mathematics Institute, \v Zitn\' a 25, 11567~Prague~1, Czech Republic}
\email{englis{@}math.cas.cz}
\thanks{Research supported by GA\v CR grant no.~201/09/0473 and RVO funding
 for I\v CO~67985840 and Natural Sciences and Engineering Research Council
(NSERC) of Canada.}
\begin{abstract}
We study an unorthodox variant of the Berezin-Toeplitz type of quantization
scheme, on a reproducing kernel Hilbert space generated by the real Hermite
polynomials and work out the associated semi-classical asymptotics. 
\end{abstract}
\maketitle

\section{Introduction}\label{sec1}
At~the heart of most approaches to quantization lies the idea of assigning
to functions~$f$ (the~classical observables) suitable operators~$T_f$ (quantum
observables) depending on an auxiliary parameter $h$ (the~Planck constant)
in~such a way that as $h\searrow0$, $T_f$~possesses an appropriate asymptotic
behaviour reflecting the ``(semi)classical limit'' of~the quantum system.
Typically, the~functions $f$ live on a manifold equipped with symplectic
structure (the~phase space) and the required asymptotic behaviour takes the
form of the ``correspondence principle''
\[ T_f T_g - T_g T_f \approx \frac{ih}{2\pi} T_{\{f,g\}}
 \label{tTA}  \]
where $\{\cdot,\cdot\}$ denotes the Poisson bracket.

For~complex manifolds which are not only symplectic but K\"ahler, a~notable
example of such a construction is the Berezin-Toeplitz quantization,
first formally introduced in \cite{BMS}, though some ideas go back to
Berezin~\cite{BeQ} and similar quantization techniques had also been
introduced by other authors \cite{alidoeb,prugov}. 
Namely, assume for simplicity that the phase space $\Omega$ is simply
connected, so that the K\"ahler form $\omega$ admits a global real-valued
potential~$\Psi$, i.e.~$\omega=\partial\overline\partial\Psi$.
Consider the $L^2$~space
\[ \ldvah = \{f\text{ measurable on }\Omega: \into |f|^2 e^{-\Psi/h}
 \,\omega^n <\infty \} \qquad(h>0),  \label{tTB}   \]
and let $\ldvahh$ (the~weighted Bergman space) be~the subspace in $L^2_h$
of functions holomorphic on $\Omega$, and $P_h:\ldvah\to\ldvahh$ the orthogonal
projection. For~a~bounded measurable function $f$ on~$\Omega$, one~then defines
the Toeplitz operator $T_f$ on $\ldvahh$ with symbol $f$~by
\[ T_f u = P_h(fu).   \label{tTC}   \]
This~is, in~fact, an~integral operator: more precisely, the space $\ldvahh$
turns out to be a reproducing kernel Hilbert space~\cite{Aro} possessing a
reproducing kernel $K_h(x,y)$, and (\ref{tTC}) can be rewritten~as
\[ T_f u (x) = \into u(y) f(y) K_h(x,y) \, e^{-\Psi(y)/h} \,\omega(y)^n .
 \label{tTD}  \]
When the manifold $\Omega$ is not simply connected, one has to assume that
the cohomology class of $\omega$ is integral, so~that there exists a Hermitian
line bundle $\LL$ with the canonical connection whose curvature form coincides
with~$\omega$; and the spaces $\ldvahh$ (and~$\ldvah$) get replaced by the
space of all holomorphic (or~all measurable, respectively) square-integrable
sections of~$\LL^{\otimes k}$, $k=\frac1h=1,2,3,\dots$. In~any case, under
reasonable technical assumptions on~$\Omega$ (e.g.~for $\Omega$
compact~\cite{BMS}, or for $\Omega$ simply connected strictly-pseudoconvex
domain in $\CC^n$ with smooth boundary $\partial\Omega$ and $e^{-\Psi}$
vanishing to exactly first order at~$\partial\Omega$~\cite{Ecmp}),
the~Toeplitz operators satisfy
\[ T_f T_g \approx T_{fg}+h T_{C_1(f,g)} + h^2 T_{C_2(f,g)} + \dots \qquad
 \text{as } h\searrow0,   \label{tTE}   \]
with some bidifferential operators $C_j$ such that $C_1(f,g)-C_1(g,f)=\frac
i{2\pi}\{f,g\}$, implying in particular that (\ref{tTA}) holds.
The~asymptotic expansion (\ref{tTE}) even holds in the strongest possible
sense of operator norms, i.e.~the difference of the left-hand side and the
sum of the first $N$ terms on the right hand side has norm, as~an operator
on~$\ldvahh$, bounded by a multiple of $h^N$ as $h\searrow0$, for all $N=1,
2,3,\dots$. Furthermore, the bidifferential operators $C_j$ can be expressed
in terms of covariant derivatives, with contractions of the curvature tensor
and its covariant derivatives as coefficients, thus encoding various geometric
properties of $(\Omega,\omega)$ in an intriguing~way.

The~Berezin-Toeplitz \emph{Ansatz} above has subsequently been extended
to a number of more general contexts outside the K\"ahler setting,
including e.g.~that of harmonic Bergman spaces on some special
domains \cite{EFock} \cite{Blaschke}~\cite{Jahn}, or~when spaces of
holomorphic functions/sections are replaced by eigenspaces of the
$\operatorname{Spin}^c$-Dirac operator on a general symplectic manifold
or even orbifold \cite{DaiLiuMa} \cite{MaMa} \cite{ShiffZeld}~\cite{BwUribe},
while numerous other developments concerned the properties of the cochains
$C_j$ or miscellaneous representation-theoretic aspects of the procedure
\cite{KarbSchl} 
\cite{ReshTakh} 
\cite{Gammelg} 
\cite{DougKle} 
\cite{HXua}~\cite{HXub} 
\cite{AUrealcov}. 

The~purpose of the present paper is to highlight an operator calculus of
a completely different flavour, which nonetheless bears certain resemblance
to~(\ref{tTA}) and~(\ref{tTD}), and arises in a quite unexpected setting ---
namely, in~connection with orthogonal polynomials. Generically, the situation
is the following: as explained above, the Berezin-Toeplitz type of quantization
relies on the existence of a certain $L^2$-space which contains a reproducing
kernel Hilbert space as a subspace; the quantization is effected by the
projection operator of this subspace. Alternatively, the reproducing kernel
$K(x,y)$ defines a family of vectors $K(\cdot , y), \; y \in \Omega$ in the
reproducing kernel Hilbert space, generally called {\em coherent states} in the
literature, and then (\ref{tTD}) shows that the quantization may also be
defined in terms of these coherent states. However, the existence of coherent
states depends only on the reproducing kernel and not on any ambient
$L^2$-space and indeed, there have been proposals, some very recent
\cite{horszaf,odzhor}, to base both the theory of coherent states and geometric
quantization using a positive definite kernel alone. The present paper may be
thought of as an extension of this line of thought to Berezin-Toeplitz
quantization. What is interesting in our present case is that it is the Hermite
polynomials, which in a way define the quantum harmonic oscillator, also define
the reproducing kernel of our problem. 

To~be more specific, let $H_n(x)$ stand for the standard Hermite polynomials
(see~Section~\ref{sec2} below for the details), and, for $0<\epsilon<1$, set
\[ K_\epsilon(x,y) = \sum_{n=0}^\infty \epsilon^n \|H_n\|^{-2} H_n(x)
 \overline{H_n(y)}, \qquad x,y\in\RR.   \label{tTF}  \]
Here $\|H_n\|$ denotes the norm in $L^2(\RR,e^{-x^2}\,dx)$, where the $\{H_n\}$
form an orthogonal basis. Then $K_\epsilon$ is a positive-definite function,
and, hence, determines uniquely a Hilbert space $\HHe$ of functions on $\RR$
for which $K_\epsilon$ is the reproducing kernel~\cite{Aro}; this space first
appeared in~\cite{AliKr} when studying ``squeezed'' coherent
states. (Its~definition may perhaps seem a bit artificial at first glance,
but~so must have seemed (\ref{tTB}) when it first came around in Berezin's
papers!) For~a (reasonable) function $f$ on~$\RR$, set
\[ T_f u(x) := \intR u(y) f(y) K_\epsilon(x,y) \, e^{-y^2}\,dy.  \label{tTG} \]
This certainly resembles the expression (\ref{tTD}) for Toeplitz operators,
however, note that this time there is no $L^2$ space around like~(\ref{tTB})
which would contain $\HHe$ as a closed subspace (in~fact, the set
$\{f(x)e^{-x^2/2}: f\in\HHe\}$ is a dense, rather than proper closed,
subset of~$L^2(\RR)$), so~there is no projection like $P_h$ around and
the original definition (\ref{tTC}) makes no sense.
In~particular, there is no reason \emph{a priori} even to expect (\ref{tTG})
to be defined, not to say bounded, on~some space (whereas with (\ref{tTC}) it
immediately follows that $\|T_f\|$ is not greater than the norm of the operator
of ``multiplication by~$f$'' on~$L^2$, hence $\|T_f\|\le\|f\|_\infty$).
It~may therefore come as a bit of a surprise that
(\ref{tTG}) actually yields, for $f\in L^\infty(\RR)$, a~bounded operator
on~$L^2(\RR)$, and, moreover, $T_f$~enjoys a nice asymptotic behaviour as
$\epsilon\nearrow1$, which we will see to correspond, in~a~very natural sense,
to~the semiclassical limit $h\searrow0$ in the original quantization setting.

It~should be stressed that the resulting asymptotics are not quite of the
form (\ref{tTE}) and, in~particular, (\ref{tTA})~does not hold, so~that our
results claim no direct physical relevance; on~the other hand, the~same
is true as well for some of the generalizations of the classical Toeplitz
calculus mentioned two paragraphs above, while not depriving the latter of
their mathematical beauty and relevance. We~hope the same to be at least
partly true also for our developments here and thus justify their disclosure
to a wider audience.

We~review the necessary standard material on Hermite polynomials in
Section~\ref{sec2}. The~spaces $\HHe$ are discussed in Section~\ref{sec3},
and the basic facts about the operators $T_f$ from (\ref{tTG}) in
Section~\ref{sec4}. The~asymptotic behaviour is studied in Section~\ref{sec5}.
In~Section~\ref{secQ} we observe how to recover the standard Berezin-Toeplitz
quantization on $\CC$ using the Hermite {\it Ansatz} and an appropriate
analogue of the Bargmann transform.

A large portion of this work was done while the second author was visiting the
first in September~2012; the hospitality of the Department of Mathematics and
Statistics of Concordia University on that occasion is gratefully acknowledged.

\section{Hermite polynomials} \label{sec2}
The Hermite polynomials $H_n(x)$, $n=0,1,2,\dots$, are defined by the formula
\[ H_n(x) := (-1)^n e^{x^2} \frac{d^n}{dx^n} e^{-x^2}.   \label{tHA}  \]
They can also be obtained from the generating function
\[ e^{2xz-z^2} = \sum_{n=0}^\infty \frac{z^n}{n!} H_n(x)  \label{tHB}  \]
and satisfy the orthogonality relations
\[ \intR H_n(x) H_m(x) \, e^{-x^2} \,dx = n!2^n\sqrt\pi \delta_{mn}.
 \label{tHC}  \]
It~follows that the \emph{Hermite functions}
\[ h_n(x) := (n!2^n\sqrt\pi)^{-1/2} H_n(x) e^{-x^2/2},  \label{tHD}  \]
$n=0,1,2,\dots$, form an orthonormal basis in the Hilbert space $L^2(\RR)$
on the real line.

The~representation (\ref{tHB}) also leads to the explicit formula
\[ H_n(x) = n! \sum_{m=0}^{[n/2]} \frac{(-1)^m(2z)^{n-2m}} {m!(n-2m)!},
 \label{tHE}   \]
$[x]$~denoting the integer part of~$x$. From this follows the estimate
\[ |H_n(z)| \le \sqrt{n!2^n} e^{\sqrt{2n}|z|}    \label{tHF}   \]
valid for all complex~$z$.

All~this, of~course, is~quite standard and well-known (see~e.g.~\cite{Askey},
Chapter~6.1), perhaps with the exception of the last estimate for non-real~$z$;
for~completeness, we~therefore attach a proof. Observe first of all that
\[ \sqrt{\frac{n!}{n^n}} \le \frac{2^{[n/2]}[\frac n2]!} {n^{[n/2]}}.
 \label{tHG}   \]
Indeed, both for $n=2k$ even and for $n=2k+1$ odd, the corresponding
inequalities
$$ \frac{\sqrt{(2k)!}}{(2k)^k} \le \frac{2^kk!}{(2k)^k}, \qquad
 \frac{\sqrt{(2k+1)!}}{(2k+1)^{k+\frac12}} \le \frac{2^kk!}{(2k+1)^k}  $$
reduce to the elementary estimate
$$ (2k)! \le (2\cdot4\cdot6\dots\cdot(2k))^2 = 4^k k!^2.   $$
Since
$$ \frac{2^mm!}{n^m} = \prod_{j=1}^m \frac j{n/2}   $$
is a decreasing function of $m$ for $0\le m\le\frac n2$, it~follows from
(\ref{tHG}) that even
$$ \sqrt{\frac{n!}{n^n}} \le \frac{2^mm!}{n^m}, \qquad
 m=0,1,\dots,\Big[\frac n2\Big].   $$
Consequently,
$$ \frac{\sqrt{n!}}{m!} 2^{\frac n2-2m} \le n^{\frac n2-m} 2^{\frac n2-m}
 = (2n)^{\frac{n-2m}2}   $$
and
$$ \sum_{m=0}^{[n/2]} \frac{\sqrt{n!} 2^{\frac n2-2m} |z|^{n-2m}} {m!(n-2m)!}
 \le \sum_{m=0}^{[n/2]} \frac{(\sqrt{2n}|z|)^{n-2m}}{(n-2m)!}
 \le e^{\sqrt{2n}|z|},  $$
proving (\ref{tHF}).

As~a~corollary, we~also get the estimate
\[ |h_n(z)| \le \pi^{-1/4} e^{\sqrt{2n}|z|-\Re z^2/2},  \qquad z\in\CC,
 \label{tHH}  \]
for the corresponding Hermite functions.

The~last fact we need to recall is the differential equation
$$ H_n''(x) - 2x H'_n(x) + 2n H_n(x) = 0  $$
for $H_n(x)$, which translates into another differential equation
$$ h''_n(x) + (2n+1-x^2) h_n(x) = 0  $$
for the Hermite functions~$h_n$. In~other words, the (Schr\"odinger) operator
\[ A:= \frac{x^2-1}2I - \frac12 \frac{d^2}{dx^2}  \label{tHI}   \]
on $L^2(\RR)$ satisfies
\[ A h_n = n h_n, \qquad n=0,1,2,\dots  \label{tHJ}  \]
(that~is, $A=\sum_n n \spr{\cdot,h_n}h_n$).

\section{Reproducing kernel spaces} \label{sec3}
For $0<\epsilon<1$, the reproducing kernels
\[ K_\epsilon(x,y) = \sum_{n=0}^\infty \epsilon^n \frac{H_n(x)H_n(y)}
 {n!2^n\sqrt\pi} = \frac1{\sqrt{(1-\epsilon^2)\pi}} e^{-\frac{\epsilon^2}
 {1-\epsilon^2}(x^2+y^2-\frac2\epsilon xy)} ,  \label{tSA}   \]
were introduced in~\cite{AliKr}; the second equality is known as Mehler's
formula. We~denote by $\HHe$ the corresponding reproducing kernel
space~\cite{Aro}; that~is, $\HHe$~is the completion of linear combinations
of the functions $K_\epsilon(\cdot,y)$, $y\in\RR$, with respect to the
scalar product
$$ \Big\langle\sum_j a_j K_\epsilon(\cdot,y_j),\sum_k b_k K_\epsilon(\cdot,x_k)
 \Big\rangle = \sum_{j,k} a_j\overline{b_k} K_\epsilon(x_k,y_j).   $$
We~will also use the Hilbert spaces
$$ \tHHe = e^{-x^2/2}\HHe = \{e^{-x^2/2}f(x): f\in\HHe \}  $$
corresponding to the reproducing kernel
\[ \tKe(x,y) = e^{-(x^2+y^2)/2} K_\epsilon(x,y) = \sum_{n=0}^\infty
 \epsilon^n h_n(x) h_n(y).   \label{tSK}   \]
Since the transition from $\HHe$ to $\tHHe$ involves only the multiplication
by~$e^{-x^2/2}$, we~will state the various facts below usually only for one
of these spaces.

The~following assertion, though not explicitly stated in~\cite{AliKr},
is~fairly straightforward.

\begin{proposition} \label{Prop1}
One has
\[ \tHHe = \{f(x)=\sum_n f_n h_n(x): \sum_n \epsilon^{-n}|f_n|^2<\infty \}
 \label{tSB}  \]
with the norm in $\tHHe$ being given~by
\[ \|f\|_\epsilon^2 = \sum_n \epsilon^{-n} |f_n|^2.   \label{tSC}  \]
\end{proposition}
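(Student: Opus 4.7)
The plan is to exhibit the space described on the right-hand side of (\ref{tSB}) as a reproducing kernel Hilbert space with kernel $\tKe$, and then invoke the uniqueness of the RKHS determined by a given positive definite kernel to conclude that it coincides with $\tHHe$.

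First I would introduce the candidate space $\mathcal H$ of formal series $\sum_n f_n h_n$ with $\sum_n\epsilon^{-n}|f_n|^2<\infty$, equipped with the inner product $\langle f,g\rangle=\sum_n\epsilon^{-n}f_n\overline{g_n}$. To promote $\mathcal H$ from a sequence space to an honest space of functions on $\RR$, I need to verify that the series $\sum_n f_n h_n(x)$ converges (pointwise, say absolutely) for every $x\in\RR$. By Cauchy–Schwarz,
\[
 \sum_n |f_n h_n(x)| \le \Bigl(\sum_n \epsilon^{-n}|f_n|^2\Bigr)^{1/2}
   \Bigl(\sum_n \epsilon^{n} |h_n(x)|^2\Bigr)^{1/2},
\]
and the last factor is finite by the estimate (\ref{tHH}), since $\epsilon^n e^{2\sqrt{2n}|x|}$ is summable for any $\epsilon\in(0,1)$ (the exponential in $\sqrt n$ is dominated by any geometric decay in $n$). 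This step is the only place where one uses analytic information about the $h_n$; it is routine but is the main thing to check.

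Next I would verify the reproducing property. For fixed $y\in\RR$, the function $\tKe(\cdot,y)=\sum_n \epsilon^n h_n(y)\,h_n(\cdot)$ has coefficients $\epsilon^n h_n(y)$, so
\[
 \|\tKe(\cdot,y)\|_{\mathcal H}^2=\sum_n\epsilon^{-n}|\epsilon^n h_n(y)|^2
 =\sum_n\epsilon^n|h_n(y)|^2=\tKe(y,y)<\infty
\]
by the same estimate, so $\tKe(\cdot,y)\in\mathcal H$. For $f=\sum_n f_n h_n\in\mathcal H$ a direct computation gives
\[
 \langle f,\tKe(\cdot,y)\rangle_{\mathcal H}
 =\sum_n \epsilon^{-n} f_n\,\overline{\epsilon^n h_n(y)}
 =\sum_n f_n h_n(y)=f(y),
\]
since $h_n$ is real-valued. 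Completeness of $\mathcal H$ is immediate from completeness of the weighted $\ell^2$ space of coefficients.

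Finally, since $\mathcal H$ is a Hilbert space of functions on $\RR$ whose reproducing kernel is $\tKe$, and since a reproducing kernel determines its Hilbert space uniquely (see \cite{Aro}), we must have $\mathcal H=\tHHe$ with matching norms, which is exactly (\ref{tSB})–(\ref{tSC}). The formula for $\HHe$ then follows by multiplying through by $e^{x^2/2}$, expanding $K_\epsilon(x,y)=e^{(x^2+y^2)/2}\tKe(x,y)$ in the basis $\{e^{x^2/2}h_n\}=\{(n!2^n\sqrt\pi)^{-1/2}H_n\}$.
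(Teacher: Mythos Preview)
Your proof is correct and follows essentially the same route as the paper: define the weighted $\ell^2$ space, check that $\tKe(\cdot,y)$ lies in it and has the reproducing property, then invoke uniqueness of the RKHS. The only minor difference is that you verify $\sum_n\epsilon^n|h_n(x)|^2<\infty$ via the estimate~(\ref{tHH}), whereas the paper reads it off directly from the closed Mehler formula~(\ref{tSA}) for $\tKe(y,y)$; your version has the small advantage of making the pointwise convergence of a general series $\sum_n f_nh_n(x)$ explicit, a step the paper leaves implicit.
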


\begin{proof} Let~us temporarily denote the space on the right-hand side of
(\ref{tSB}) (with the norm given by~(\ref{tSC})) by~$\MMe$. From the equality
$$ \sum_n |\epsilon^n h_n(y)|^2 \epsilon^{-n} = \sum_n \epsilon^n|h_n(y)|^2
 = \tKe(y,y) = \frac1{\sqrt{(1-\epsilon^2)\pi}}
 e^{\frac{\epsilon-1}{\epsilon+1}x^2} < \infty   $$
(cf.~(\ref{tSA})), it~follows that the function
$$ \tKey:= \sum_n \epsilon^n h_n(y) h_n = \tKe(\cdot,y)   $$
belongs to~$\MMe$, for any $y\in\RR$. Furthermore, for any $f=\sum_n f_n h_n
\in\HHe$,
$$ \spr{f,\tKey}_\epsilon = \sum_n \epsilon^{-n} f_n \overline{\epsilon^n
 h_n(y)} = \sum_n f_n h_n(y) = f(y)  $$
(here we have used the fact that $h_n$ is real-valued on~$\RR$). Thus~$\tKe$
is the reproducing kernel for~$\MMe$. Since a reproducing kernel Hilbert space
is uniquely determined by its reproducing kernel, $\MMe=\tHHe$, with equality
of norms.   \end{proof}

The~last proposition allows for the following interpretation of the spaces~$\tHHe$
and~$\HHe$. Recall that the Sobolev space of order $s$ on $\RR$ can be defined
as the (completion of~the) space of all~$f$ ($\in\cD(\RR)$) for which
$$ \|f\|^2_s := \spr{(I-\Delta)^s f,f}_{L^2(\RR)} < \infty.  $$
By~analogy, one~could define ``Hermite-Sobolev'' spaces $\WW^s(\RR)$ on
$\RR$~by
$$ \|f\|_s^2 := \spr{(I+A)^s f,f}_{L^2(\RR)} < \infty.   $$
In~view of~(\ref{tHJ}), this is equivalent~to
$$ \WW^s(\RR) = \{f=\sum_n f_nh_n: \|f\|_s^2=\sum_n (n+1)^s|f_n|^2<\infty\}. $$
Our~spaces $\tHHe$ are thus obtained upon replacing $(n+1)^s$
by~$\epsilon^{-n}$. Back in the context of the ordinary Laplacian, they are
thus analogues of the spaces
$$ e^{\epsilon\Delta/2}L^2(\RR) = \{f: \spr{e^{-\epsilon\Delta}f,f}
 _{L^2(\RR)} < \infty \}  $$
of~solutions at time $t=\frac\epsilon2$ of the heat equation
$\frac{\partial u}{\partial t}=\Delta u$, $u(x,0)=f(x)$
(``caloric functions''). More precisely, $\tHHe=e^{-A\log\sqrt\epsilon}
L^2(\RR)$ is~the space of solutions at time $t=-\frac12\log\epsilon$ of
the modified heat equation
$$ \frac{\partial u}{\partial t} = Au, \qquad u=u(x,t), \; t>0,   $$
with initial condition $u(\cdot,0)\in L^2(\RR)$.

We~conclude this section by showing that $\HHe$ is actually a space of
holomorphic functions, like the weighted Bergman spaces mentioned in
the Introduction.
(The~same is true also for the ordinary spaces of caloric functions.)

\begin{theorem} \label{thm2}
Each $f\in\HHe$ extends to an entire function on~$\CC$, and $\HHe$ is the
space of $($the~restrictions to~$\RR$~of$)$ holomorphic functions on $\CC$
with reproducing kernel
\[ K_\epsilon(x,y) = \sum_{n=0}^\infty \epsilon^n
 \frac{H_n(x)\overline{H_n(y)}} {n!2^n\sqrt\pi}
 = \frac{e^{-\frac{\epsilon^2}{1-\epsilon^2}(x^2+\oy^2-\frac2\epsilon x\oy)}}
 {\sqrt{(1-\epsilon^2)\pi}} .   \label{tSE}  \]
\end{theorem}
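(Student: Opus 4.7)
The plan is to use Proposition~\ref{Prop1} together with the pointwise estimate (\ref{tHF}) to show both that every element of $\HHe$ extends to an entire function and that the natural candidate for the reproducing kernel actually lies in $\HHe$ and reproduces values at complex points.

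First I would translate Proposition~\ref{Prop1} back to $\HHe$: writing $f(x)=\sum_n c_n H_n(x)$ and using $h_n=(n!2^n\sqrt\pi)^{-1/2}H_n e^{-x^2/2}$, membership in $\HHe$ is equivalent to
\[
\|f\|_\epsilon^2=\sqrt\pi\sum_n \epsilon^{-n}\,n!\,2^n\,|c_n|^2<\infty.
\label{plancoef}
\]
Now for any $z\in\CC$, the estimate (\ref{tHF}) together with Cauchy--Schwarz gives
\[
\sum_n |c_n H_n(z)|\le \Big(\sum_n \epsilon^{-n}n!2^n|c_n|^2\Big)^{1/2}\Big(\sum_n \epsilon^n e^{2\sqrt{2n}|z|}\Big)^{1/2}.
\]
Since $\log\epsilon<0$, the general term of the second series behaves like $e^{n\log\epsilon+2\sqrt{2n}|z|}$, which is summable on every compact set of $\CC$ (the linear-in-$n$ negative exponent dominates the $\sqrt n$ positive one). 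Hence $\sum c_n H_n(z)$ converges uniformly on compacta, and by Weierstrass $f$ has an entire extension; the identity principle guarantees this extension is unique and agrees with $f$ on~$\RR$.

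Next I would fix $z\in\CC$ and define $\Kez(x):=\sum_n \epsilon^n\,\overline{H_n(z)}\,H_n(x)/(n!2^n\sqrt\pi)$. Applying (\ref{plancoef}) to the coefficients $c_n=\epsilon^n\overline{H_n(z)}/(n!2^n\sqrt\pi)$ reduces $\|\Kez\|_\epsilon^2$ to $\pi^{-1/2}\sum_n \epsilon^n |H_n(z)|^2/(n!2^n)$, which is finite by (\ref{tHF}) and the same argument as above. Thus $\Kez\in\HHe$. To check the reproducing property, I would pass through the isometric isomorphism $\HHe\ni f\mapsto e^{-x^2/2}f\in\tHHe$: the $h_n$-coefficients of $e^{-x^2/2}f$ and of $e^{-x^2/2}\Kez$ are, respectively, $c_n(n!2^n\sqrt\pi)^{1/2}$ and $\epsilon^n\overline{H_n(z)}(n!2^n\sqrt\pi)^{-1/2}$, and the Proposition~\ref{Prop1} inner product then yields
\[
\spr{f,\Kez}_\epsilon=\sum_n c_n H_n(z)=f(z),
\]
the value of the entire extension at $z$ (since $H_n$ has real coefficients, $\overline{H_n(z)}=H_n(\bar z)$, but here I only need it to give $H_n(z)$ after conjugation in the inner product).

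Finally, for the closed form I would invoke Mehler's formula in the form stated in (\ref{tSA}), which agrees with the series definition on~$\RR\times\RR$; both sides of (\ref{tSE}) are entire in $x$ and antiholomorphic in $y$ (the series converging locally uniformly by the estimate above), so by the identity principle they must coincide throughout $\CC\times\CC$, replacing each $y$ by~$\bar y$ in the exponent. The main technical obstacle is really just Step~1---controlling $\sum c_nH_n(z)$ uniformly on compact sets of $\CC$---and this is exactly what the estimate (\ref{tHF}), prepared in Section~\ref{sec2}, was designed to handle; after that the reproducing property and closed form follow essentially formally.
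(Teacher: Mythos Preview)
Your proof is correct and follows essentially the same route as the paper's: use Proposition~\ref{Prop1} to write $f$ as a Hermite series with square-summable weighted coefficients, then combine the pointwise bound~(\ref{tHF}) with Cauchy--Schwarz to reduce everything to the convergence of $\sum_n \epsilon^n e^{2\sqrt{2n}|z|}$, from which the entire extension and the reproducing property of $K_{\epsilon,z}$ follow. The only additions you make beyond the paper's argument are the explicit verification that $K_{\epsilon,z}\in\HHe$ and the analytic-continuation justification of the closed Mehler form on $\CC\times\CC$, both of which the paper leaves implicit.
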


\begin{proof} By~the preceding proposition, we~have
\[ f = \sum_n f_n (n!2^n\sqrt\pi)^{-1/2} H_n,  \label{tSD}  \]
with
$$ \sum_n \epsilon^{-n} |f_n|^2 = \|f\|^2_{\HHe} < \infty .  $$
Consequently, for any $z\in\CC$, we~get using the estimate~(\ref{tHF})
\begin{align*}
 \sum_n |f_n H_n(z) (n!2^n\sqrt\pi)^{-1/2}|
 &\le \sum_n |f_n| \pi^{-1/4} e^{\sqrt{2n}|z|} \\
 &\le \pi^{-1/4} \|f\|_{\HHe} \Big(\sum_n \epsilon^n
 |e^{\sqrt{2n}|z|}|^2 \Big)^{1/2}.   \end{align*}
Since for any fixed $z\in\CC$, the radius of convergence of $\sum_n \epsilon^n
e^{2\sqrt{2n}|z|}$, with $\epsilon$ as the variable, is~1, the~expression in
the last parentheses is finite for $0<\epsilon<1$. Thus the series~(\ref{tSD})
converges for any $z\in\CC$ (and~uniformly on compact subsets). This proves the
first part of the theorem, and also shows that
$$ f(z) = \spr{f,\Kez}_{\HHe}, \qquad z\in\CC,   $$
with
$$ \Kez := \sum_n \epsilon^n \overline{H_n(z)(n!2^n\sqrt\pi)^{-1/2}}
 (n!2^n\sqrt\pi)^{-1/2} H_n,   $$
that is,
$$ \Kez(w) = \sum_n \epsilon^n\frac{H_n(w)\overline{H_n(z)}}{n!2^n\sqrt\pi}, $$
showing that (\ref{tSE}) is indeed the reproducing kernel for $\HHe$ on all
of~$\CC$.  \end{proof}

\section{Toeplitz-type operators} \label{sec4}
Drawing inspiration from~(\ref{tTD}), we~define, for a function (``symbol'')
$f$~on~$\RR$, the ``Toeplitz operator'' $\Te_f$, $0<\epsilon<1$,
on~$L^2(\RR)$~by
\[ \Te_f u(x) := \intR u(y) f(y) K_\epsilon(x,y) \, e^{-y^2}\,dy.
 \label{tRA}   \]
We~will also use the analogous operators
\[ \begin{aligned}
 \tTe_f u(x) :&= \intR u(y) f(y) \tKe(x,y) \, dy   \\
 &= \intR u(y) f(y) K_\epsilon(x,y) \, e^{-\frac{x^2+y^2}2} \, dy
 \end{aligned}  \label{tRB}   \]
on $L^2(\RR)$ defined using the kernel $\tKe$ instead of~$K_\epsilon$. Clearly,
\[ \tTe_f u = \e^{-1/2} \Te_f \e^{1/2}   \label{tRC}   \]
where we introduced the notation
$$ \e(x) := e^{x^2}.   $$
It~turns out that the operators $\Te$ have a bit nicer expression in terms of
the Fourier transform, while $\tTe$ are a bit nicer from the point of view of
the ``semiclassical'' asymptotics as $\epsilon\nearrow1$.
In~view of~(\ref{tRC}), it~is always a simple matter to pass from $\Te$ to
$\tTe$ or vice versa.

In~the formula~(\ref{tTD}), the reproducing kernel $K_h(x,y)$ is the integral
kernel of the orthogonal projection $P_h$ onto~$\ldvahh$, i.e.~of~a~bounded
operator in the corresponding space~$\ldvah$. On~the other hand, for~$\tKe$
we have no such interpretation, in~fact the space~$\tHHe$, of~which $\tKe$
is the reproducing kernel, is~dense in~$L^2(\RR)$
(this is immediate from Proposition~\ref{Prop1} and the fact that $\{h_n\}
_{n=0}^\infty$ is an orthonormal basis of~$L^2(\RR)$). The~next two results
may therefore seem somewhat surprising.

\begin{theorem}
The operators $\Te_f$ and $\tTe_f$ are densely defined for any
$f\in C^\infty(\RR)$. Furthermore, $\Te_f$~is bounded for $f\in L^\infty(\RR)$,
with
$$ \|\Te_f\| \le C_\epsilon \|f\|_\infty   $$
for some constant $C_\epsilon$ depending only on~$\epsilon$, $0<\epsilon<1$.
\end{theorem}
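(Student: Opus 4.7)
The plan rests on Mehler's closed-form expression~(\ref{tSA}) for $K_\epsilon$, which displays the kernel as a positive Gaussian on $\RR^2$ and thereby makes both the convergence of the defining integrals and the application of Schur's test direct. For dense definability I~take $\cD:=C_c^\infty(\RR)$, which is dense in $L^2(\RR)$. Given $u\in\cD$ with $\operatorname{supp}u\subset[-R,R]$ and any $f\in C^\infty(\RR)$, the integrand in~(\ref{tRA}) (respectively~(\ref{tRB})) is continuous and compactly supported in~$y$ for each $x\in\RR$, so the integral converges absolutely. The completed-square identity
$$K_\epsilon(x,y)e^{-y^2}=\frac1{\sqrt{(1-\epsilon^2)\pi}}\exp\!\Bigl[-\tfrac{(y-\epsilon x)^2}{1-\epsilon^2}\Bigr]$$
then shows that for $y\in[-R,R]$ this quantity decays like $e^{-\epsilon^2x^2/(1-\epsilon^2)}$ as $|x|\to\infty$, so that $\Te_f u$ is Gaussianly decaying in~$x$ and lies in $L^2(\RR)$; the case of $\tTe_f$ is analogous using $\tKe$ in place of $K_\epsilon e^{-y^2}$.

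For the boundedness statement, with $f\in L^\infty(\RR)$, I~apply Schur's test to the positive kernel $k(x,y):=|f(y)|K_\epsilon(x,y)e^{-y^2}$. The pivotal step is the pair of Gaussian integrals
$$\intR K_\epsilon(x,y)e^{-y^2}\,dy=1,\qquad\intR K_\epsilon(x,y)e^{-y^2}\,dx=\frac1\epsilon,$$
independent of $x$ and of $y$ respectively. The first is immediate from the display above after the shift $y\mapsto y+\epsilon x$; the second is obtained by the symmetric completion of squares in~$x$, which leaves behind a residual factor $e^{y^2}$ that cancels the weight $e^{-y^2}$ up to the factor~$1/\epsilon$. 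Schur's test with $p=q=1$ then yields
$$\|\Te_f\|_{L^2\to L^2}\le\sqrt{\|f\|_\infty\cdot\tfrac1\epsilon\|f\|_\infty}=\frac{\|f\|_\infty}{\sqrt\epsilon},$$
so that the constant $C_\epsilon:=1/\sqrt\epsilon$ does the job.

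As a cleaner parallel and sanity check, $\tTe_f$ admits a direct bound: expanding $\tKe(x,y)=\sum_n\epsilon^nh_n(x)h_n(y)$ and exchanging sum with integral against $u\in L^2(\RR)$ gives $\tTe_fu=E_\epsilon(fu)$, where $E_\epsilon:=\sum_n\epsilon^n\spr{\cdot,h_n}h_n=\epsilon^A$ is a contraction on $L^2(\RR)$ (the heat semigroup of the operator~$A$ of~(\ref{tHI})); hence $\|\tTe_f\|\le\|f\|_\infty$. This shortcut is, however, \emph{not} available for $\Te_f$: the relation~(\ref{tRC}) is a conjugation by the $L^2$-unbounded multiplications $e^{\pm x^2/2}$, so the separate Schur computation above is the real technical content. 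The main---if mild---obstacle is therefore precisely the completion of squares underlying the two displayed Gaussian integrals; once they are in hand, Schur's lemma delivers the claim essentially for free.
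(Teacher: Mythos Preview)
Your argument is correct, but it follows a different path from the paper's. The paper rewrites $\Te_f$ as the composition $\delta_{\epsilon/\sqrt{1-\epsilon^2}}\,G\,\delta_{\sqrt{1-\epsilon^2}}\,M_f$, where $\delta_r$ is dilation by $r$ and $G$ is convolution with the Gaussian $\e^{-1}$; dense definability then follows because each factor maps the Schwartz space into itself (in particular $\Te_fu\in\SS$, a slightly stronger conclusion than your Gaussian-decay estimate), and boundedness follows by computing the operator norm of the dilation--convolution part via the Fourier transform, yielding $C_\epsilon=(4\pi\epsilon)^{-1/2}$. Your route is more elementary: you read off the completed-square kernel directly from Mehler's formula and feed the two one-variable Gaussian integrals $\int K_\epsilon e^{-y^2}\,dy=1$ and $\int K_\epsilon e^{-y^2}\,dx=1/\epsilon$ into Schur's test, obtaining $C_\epsilon=1/\sqrt\epsilon$. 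What the paper's factorisation buys is a structural formula that is reused in Section~\ref{sec5} for the stationary-phase asymptotics and a somewhat sharper constant; what your approach buys is that it needs no Fourier machinery and no passage through~$\SS$. Your closing remark on $\tTe_f=\epsilon^A M_f$ is exactly the content of the \emph{next} theorem in the paper, and your observation that this does not transfer to $\Te_f$ via~(\ref{tRC}) is well taken.
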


\begin{proof} By~(\ref{tSA}),
\[ \begin{aligned}
\Te_f u(x) &= \intR (fu)(y) e^{-\frac{\epsilon^2}{1-\epsilon^2}
 (x^2-\frac 2\epsilon xy+y^2)-y^2} \, \frac{dy}{\sqrt{(1-\epsilon^2)\pi}} \\
&= \intR (fu)(y) e^{-\frac{\epsilon^2}{1-\epsilon^2}
 (x-\frac y\epsilon)^2} \, \frac{dy}{\sqrt{(1-\epsilon^2)\pi}}  \\
&= \intR (fu)(\epsilon x-\sqrt{1-\epsilon^2}t) \;
 e^{-t^2} \frac{dt}{\sqrt\pi} \\
&= (\delta_{\sqrt{1-\epsilon^2}}(fu)*\e^{-1})
 \Big(\frac{\epsilon x}{\sqrt{1-\epsilon^2}}\Big) ,  \end{aligned}
 \label{tRF}    \]
where we have introduced the dilation operator
$$ \delta_r u(x) := u(rx) .   $$
In~other words, introducing also the operator
$$ G u:= u * \e^{-1}   $$
of convolution with the Gaussian~$\e^{-1}$, we~obtain
\[ \Te_f = \delta_{\epsilon/\sqrt{1-\epsilon^2}} G
 \delta_{\sqrt{1-\epsilon^2}} M_f,  \label{tRD}   \]
where
$$ M_f: u \mapsto fu   $$
denotes the operator of ``multiplication by~$f$''. If~$f\in C^\infty(\RR)$ and
$u\in\cD(\RR)$, the space of smooth functions on~$\RR$ with compact support,
then $fu\in\cD\subset\SS$, the Schwartz space on~$\RR$. Since dilations map
$\SS$ into itself while
\[ Gf = \Big( \hat f \, \frac{\e^{-1/4}}{2\sqrt\pi} \Big) ^\vee  \label{tRE} \]
(here $\hat{\;}$ and ${}^\vee$ denote the Fourier transform and the inverse
Fourier transform, respectively) also maps $\SS$ into itself, we~conclude that
$$ \Te_f u \in\SS \qquad\text{for any } f\in C^\infty(\RR) \text{ and }
 u\in\cD(\RR).  $$
Since $\cD$ is dense in $L^2$ and $\SS\subset L^2$, this proves the first part
of the theorem for~$\Te$. The~assertion for $\tTe$ is then immediate from
(\ref{tRC}) and the fact that $\e^{1/2}\cD\subset\cD$ and $\e^{-1/2}L^2
\subset L^2$.

The~second part follows from~(\ref{tRD}), because $\|M_f\|\le\|f\|_\infty$ and
$$ \|\delta_{\epsilon/\sqrt{1-\epsilon^2}} G \delta_{\sqrt{1-\epsilon^2}} \|
 = (4\pi\epsilon)^{-1/2} =: C_\epsilon < \infty  $$
by~an elementary argument and standard properties of the Fourier transform.
\end{proof}

\begin{theorem}
For $f\in L^\infty$ the operator $\tTe_f$ is bounded on~$L^2(\RR)$.
\end{theorem}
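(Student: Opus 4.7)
The plan is to factor $\tTe_f=\tPe M_f$, where $M_f:u\mapsto fu$ denotes multiplication by $f$ and $\tPe$ is the integral operator on $L^2(\RR)$ with kernel $\tKe(x,y)$. For $f\in L^\infty$ and $u\in L^2$ the product $fu$ lies in $L^2$, so this factorisation is meaningful and reduces the theorem to showing that $\tPe$ is bounded on $L^2(\RR)$ with norm at most $1$. Once that is established, $\|\tTe_f u\|_2=\|\tPe(fu)\|_2\le\|fu\|_2\le\|f\|_\infty\|u\|_2$ gives the result.

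For $\tPe$ I would exploit the series expansion $\tKe(x,y)=\sum_n\epsilon^n h_n(x)h_n(y)$ from~(\ref{tSK}). Since $\{h_n\}$ is an orthonormal basis of $L^2(\RR)$ by~(\ref{tHD}), the natural candidate for $\tPe$ is the diagonal operator $\sum_n\epsilon^n\spr{\cdot,h_n}h_n$, a self-adjoint contraction whose eigenvalues $\epsilon^n\in(0,1]$ manifestly yield $\|\tPe\|\le 1$. Concretely, for $v=\sum_n v_n h_n$, $\|\tPe v\|^2=\sum_n\epsilon^{2n}|v_n|^2\le\|v\|^2$.

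What needs care is the identification of the integral operator with kernel $\tKe$ with this spectrally defined operator. Mehler's closed form in~(\ref{tSA}) exhibits $\tKe$ as a genuine Gaussian on $\RR\times\RR$, so $\tKe\in L^2(\RR\times\RR)$; consequently $\tPe$ is Hilbert--Schmidt (in fact trace class, with trace $\sum_n\epsilon^n=1/(1-\epsilon)$), hence \emph{a priori} a bounded operator on $L^2(\RR)$. The $L^2(\RR^2)$-convergence of the series $\sum_n\epsilon^n h_n(x)h_n(y)$ (whose squared norm is $\sum_n\epsilon^{2n}=1/(1-\epsilon^2)$) then justifies a Fubini computation showing $\tPe h_n=\epsilon^n h_n$, which pins down $\tPe$ as the required diagonal operator.

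The only real obstacle is bookkeeping: verifying convergence of the integral defining $\tTe_f u$ for general $u\in L^2$ (which follows from $\tKe(x,\cdot)\in L^2(\RR)$ for a.e.~$x$, by Fubini applied to $\tKe\in L^2(\RR^2)$, combined with Cauchy--Schwarz), and justifying termwise integration against the $h_n$. Both issues are dispatched painlessly by the $L^2(\RR^2)$-membership of $\tKe$, so the proof is essentially a one-line consequence of Proposition~\ref{Prop1} and Mehler's formula.
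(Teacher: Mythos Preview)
Your argument is correct and is essentially the paper's own proof: the paper writes $\tTe_f u=\sum_n\epsilon^n\spr{fu,h_n}h_n$ directly from~(\ref{tSK}) and then bounds $\|\tTe_f u\|^2=\sum_n\epsilon^{2n}|\spr{fu,h_n}|^2\le\|fu\|^2\le\|f\|_\infty^2\|u\|^2$, which is exactly your factorisation $\tTe_f=\tPe M_f$ with $\tPe$ the diagonal contraction. Your version supplies the bookkeeping (Hilbert--Schmidt property of $\tPe$, justification of termwise integration) that the paper leaves implicit, but the mathematical content is the same.
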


\begin{proof} By~(\ref{tSK})
$$ \tTe_f u = \sum_n \epsilon^n \spr{fu,h_n}h_n.   $$
Thus, for any $0<\epsilon<1$,
$$ \|\tTe_f u\|^2 = \sum_n \epsilon^{2n} |\spr{fu,h_n}|^2
 \le \sum_n |\spr{fu,h_n}|^2 = \|fu\|^2 \le \|f\|_\infty^2 \|u\|^2,  $$
so $\|\tTe_f\|\le\|f\|_\infty$.  \end{proof}

We~remark that the same argument as in the last proof also shows that $\Te_f$
is~bounded, for any $f\in L^\infty$, on~the weighted space
$L^2(\RR,e^{-x^2}\,dx)$.

\section{``Semiclassical'' asymptotics} \label{sec5}
The~Parseval identity
$$ f = \sum_n \spr{f,h_n} h_n ,  \qquad f\in L^2(\RR),   $$
shows that, at~least in the weak sense
(i.e.~as distributions on~$\RR\times\RR$),
\[ \sum_n h_n(x) h_n(y) = \delta(x-y) .   \label{tRG}   \]
Thus formally
$$ \Te_f u = fu  \qquad\text{for } \epsilon=1,   $$
that~is, the operator $\Te_f$ reduces just to the multiplication operator
$M_f$ on $L^2(\RR)$ (in~the sense explained above) for $\epsilon=1$.
This brings forth naturally the question of the finer description of the
behaviour of $\Te_f$ as $\epsilon\nearrow1$, in~particular, whether one has
any analogue of the ``semiclassical limit'' formulas like (\ref{tTA}) or
(\ref{tTE}) in the traditional procedures.

The~latter asymptotics can be found by the usual Laplace (or~stationary
phase, or~WJKB) method, see~e.g.~H\"ormander~\cite[\S7.7]{Hrm}.
Namely, assume for simplicity that $f\in C^\infty(\RR)$ and $u\in\cD(\RR)$.
We~have seen in (\ref{tRF}) that
\begin{align*}
\Te_f u(x) &= \intR (fu)(y) e^{-\frac{(y-\epsilon x)^2}{1-\epsilon^2}} \,
 \frac{dy}{\sqrt{(1-\epsilon^2)\pi}}   \\
 &= \intR (fu) (\epsilon x-\sqrt{1-\epsilon^2}t) \; e^{-t^2}
 \, \frac{dt}{\sqrt\pi} .   \end{align*}
Let~us temporarily write, for the sake of brevity, $fu=F$. Standard estimates
used in the stationary phase method show that the integration over $y$ outside
a small neighbourhood of $x$ gives an exponentially small contribution as
$\epsilon\nearrow1$, while in the integral over that neighbourhood $F$ can
be replaced by its Taylor expansion. Thus we arrive~at
\begin{align*}
\intR F(\epsilon x-\sqrt{1-\epsilon^2}t) \; e^{-t^2} \,\frac{dt}{\sqrt\pi}
 &\approx \sum_{k=0}^\infty \frac{F^{(k)}(x)}{k!} \intR
 (\epsilon x-\sqrt{1-\epsilon^2}t)^k e^{-t^2} \,\frac{dt}{\sqrt\pi}  \\
 &= \sum_{j,l=0}^\infty \frac{F^{(j+l)}(x)}{j!l!} (\epsilon-1)^l x^l
 (-\sqrt{1-\epsilon^2})^j \intR t^j e^{-t^2} \,\frac{dt}{\sqrt\pi}  \\
 &= \sum_{k,l=0}^\infty \frac{F^{(2k+l)}(x)}{(2k)!l!} (\epsilon-1)^l x^l
 (1-\epsilon^2)^k \frac{\Gamma(k+\frac12)}{\Gamma(\frac12)}
\end{align*}
as~$\epsilon\nearrow1$. Writing $1-\epsilon^2=(1-\epsilon)(2-(1-\epsilon))$
and using the binomial theorem to get powers of $(1-\epsilon)$ only,
we~finally~get
\[ \Te_f u(x) \approx \sum_{k,l,m=0}^\infty (1-\epsilon)^{k+l+m}
 \frac {(fu)^{(2k+l)}(x) x^l (-1)^{l+m} 2^{k-m} \binom km} {l!k!4^k}
 \label{tRH}   \]
as~$\epsilon\nearrow1$. In~particular,
\[ \Te_f u = fu + (1-\epsilon) \Big[ \Big( \frac{f''}2-xf'\Big) u + (f'-xf)u'
 + \frac f2 u'' \Big] + O((1-\epsilon)^2).  \label{tRI}   \]

A~similar approach could, of course, be~applied also to~$\tTe_f$; however,
we~proceed to use a different argument, which not only recovers the
formula~(\ref{tRH}) (upon~passing from $\tTe$ to $\Te$ via the
relation~(\ref{tRC})) but~is also shorter and applicable in other situations.

Recall the Schr\"odinger (``number'') operator
$$ A = \frac{x^2-1}2 I - \frac12 \frac{d^2}{dx^2}   $$
which is an (unbounded) self-adjoint operator on $L^2(\RR)$ satisfying
$Ah_n=nh_n$, $n=0,1,2,\dots$.

\begin{theorem}
We~have
$$ \tTe_f = \epsilon^A M_f,   $$
where $\epsilon^A=e^{A\log\epsilon}$ is understood in the sense of the
spectral theorem. Consequently, as~$\epsilon\nearrow1$,
\[ \tTe_f u \approx \sum_{k=0}^\infty \frac{(\log\epsilon)^k}{k!} A^k(fu).
 \label{tRJ}   \]
\end{theorem}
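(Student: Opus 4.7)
The identity $\tTe_f = \epsilon^A M_f$ is essentially the spectral theorem for $A$ in disguise, and the asymptotic expansion then follows by a formal Taylor expansion of $e^{A\log\epsilon}$ which needs only a routine justification on a suitable dense subspace.

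First I~would start from the series representation (\ref{tSK}), $\tKe(x,y)=\sum_{n=0}^\infty\epsilon^n h_n(x)h_n(y)$, and substitute it into the defining formula (\ref{tRB}) for~$\tTe_f$. This calculation has already been carried out in the proof of the preceding theorem and yields
$$ \tTe_f u = \sum_{n=0}^\infty \epsilon^n \spr{fu,h_n}\, h_n. $$
Since $\{h_n\}_{n=0}^\infty$ is an orthonormal basis of $L^2(\RR)$ consisting of eigenvectors of the self-adjoint operator~$A$ with eigenvalues~$n$, the Borel functional calculus gives, for any bounded Borel function $\phi$ on the spectrum of~$A$,
$$ \phi(A) g = \sum_{n=0}^\infty \phi(n)\spr{g,h_n}\, h_n, \qquad g\in L^2(\RR). $$
Applying this with $\phi(t)=\epsilon^t$ and $g=fu=M_f u$ (which belongs to~$L^2$ whenever $f\in L^\infty$) immediately produces $\tTe_f u = \epsilon^A M_f u$.

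The expansion (\ref{tRJ}) then comes from the formal Taylor series $e^{A\log\epsilon}=\sum_{k=0}^\infty \frac{(\log\epsilon)^k}{k!}A^k$, interpreted in the asymptotic sense. The most convenient dense domain is the Schwartz space~$\SS$, which is preserved both by $M_f$ (for $f\in C^\infty$ of polynomial growth, and in particular for $f\in C^\infty$ with $u\in\cD$) and by arbitrary powers of~$A$. For $g=fu\in\SS$ with Hermite expansion $g=\sum_m g_m h_m$, the elementary scalar estimate
$$ \Big|\epsilon^m - \sum_{k=0}^N \frac{(m\log\epsilon)^k}{k!}\Big| \le \frac{(m|\log\epsilon|)^{N+1}}{(N+1)!} $$
(valid since $m\log\epsilon\le0$ forces the exponential factor in the Lagrange remainder to be at most~$1$), combined with Parseval, gives
$$ \Big\|\tTe_f u-\sum_{k=0}^N\frac{(\log\epsilon)^k}{k!}A^k(fu)\Big\|_{L^2}
 = O\bigl(|\log\epsilon|^{N+1}\bigr) = O\bigl((1-\epsilon)^{N+1}\bigr). $$

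The one~point needing care is precisely this interpretation of (\ref{tRJ}): $A$~is unbounded, so the series on the right is divergent for generic $u\in L^2$, and even on~$\SS$ the individual summands grow rapidly with~$k$. What saves the day is the rapid decay of the Hermite coefficients of~$fu$ --- equivalently, its membership in the intersection of all the ``Hermite--Sobolev'' spaces introduced in Section~\ref{sec3} --- which controls the remainder term-by-term and makes (\ref{tRJ}) a genuine asymptotic, rather than convergent, expansion.
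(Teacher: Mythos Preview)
Your argument is correct and follows essentially the same route as the paper: both derive $\tTe_f u=\sum_n\epsilon^n\spr{fu,h_n}h_n$ from the kernel expansion and then identify this with $\epsilon^A(fu)$ via the spectral decomposition of~$A$. Your treatment of the asymptotic expansion is in fact more careful than the paper's --- the paper simply writes the formal exponential series without further comment, whereas you supply the Lagrange remainder bound and the Hermite--Sobolev control on $fu\in\SS$ that make (\ref{tRJ}) a genuine asymptotic statement.
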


\begin{proof} Let~us keep our shorthand $F=uf$, assuming for simplicity that
$F\in\cD(\RR)$. Then
\begin{align*}
\intR F(y) \tKe(x,y) \, dy
&= \intR F(y) \sum_n \epsilon^n h_n(x) h_n(y) \,dy\\
&= \sum_n \epsilon^n \spr{F,h_n} h_n(x)  \\
&= \sum_n \spr{F,h_n} \epsilon^A h_n(x)  \\
&= \Big(\epsilon^A \sum_n \spr{F,h_n} h_n \Big) (x)  \\
&= (\epsilon^A F)(x) = \sum_k \frac{(\log\epsilon)^k}{k!} (A^k F)(x).
\end{align*}
Recalling that $F=fu$ gives the result.   \end{proof}

Of~course, using the familiar series
$$ \log\epsilon = -\sum_{j=1}^\infty \frac{(1-\epsilon)^j}j  $$
one could easily pass in (\ref{tRJ}) from powers of $\log\epsilon$ to powers
of $(1-\epsilon)$.

The~beginning of the asymptotic expansion (\ref{tRJ}) reads
$\tTe_fu=fu+(1-\epsilon)A(fu)+O((1-\epsilon)^2)$, or
\[ \tTe_f = M_f + (1-\epsilon)AM_f + O((1-\epsilon)^2) .  \label{tRK}  \]
Using the similar formulas for $g$ and $fg$ and subtracting, we~arrive~at
\[ \tTe_f \tTe_g - \tTe_{fg} = (1-\epsilon) M_fAM_g + O((1-\epsilon)^2)
 \label{tRL}  \]
and
\begin{align}
\tTe_f\tTe_g-\tTe_g\tTe_f &= (1-\epsilon)(M_fAM_g-M_gAM_f) +O((1-\epsilon)^2)
 \nonumber  \\
&= \frac{1-\epsilon}2 (M_f D^2 M_g-M_g D^2 M_f) + O((1-\epsilon)^2)
 \nonumber  \\
&= (1-\epsilon)(M_{\frac{fg''-gf''}2} + M_{fg'-gf'}D) + O((1-\epsilon)^2),
 \label{tRM}  \end{align}
where we introduced the notation
$$ Du(x) := \frac{du(x)}{dx}   $$
for the differentiation operator on~$\RR$. Comparing these formulas with
(\ref{tTA}) and~(\ref{tTE}) --- the~role of the Planck constant being now
played by the quantity $1-\epsilon$ --- we~see that, first of~all, the~role
of the Poisson bracket is now played by the (second-order) expression
$\dfrac{fg''-gf''}2$; and, secondly, that in addition to the ``Toeplitz''
operators~$\tTe$, the~differentiation operator $D$ appears too.

For~$\Te$ instead of~$\tTe$, the~formulas (\ref{tRL}) and (\ref{tRM}) get
replaced~by
$$ T_fT_g-T_{fg} = (1-\epsilon)\Big[\Big(\frac{fg''}2-xfg'\Big)I + (fg'-xfg)D
 +\frac{fg}2 D^2\Big] + O((1-\epsilon)^2)   $$
and
$$ \postdisplaypenalty1000000
 T_fT_g-T_gT_f = (1-\epsilon)\Big[\Big(\frac{fg''-gf''}2+xf'g-xfg'\Big)I +
 (fg'-f'g)D \Big] + O((1-\epsilon)^2),   $$
respectively, and a similar comment applies.

\section{Berezin-Toeplitz quantization via Hermite polynomials} \label{secQ}
By~virtue of~(\ref{tSE}), the~multiplication operator
$$ M: f(z) \longmapsto \frac{\sqrt{2\epsilon}}{(1-\epsilon^2)^{1/4}\pi^{1/4}}
 e^{\frac{\epsilon^2}{1-\epsilon^2}z^2} f(z)  $$
maps the space $\HHe$ onto the space of holomorphic functions on~$\CC$ with
reproducing kernel
$$ F_\epsilon(z,w) := \frac{2\epsilon}{(1-\epsilon^2)\pi} K_\epsilon(z,w)
 = \frac{2\epsilon}{(1-\epsilon^2)\pi} e^{\frac{2\epsilon}{(1-\epsilon^2)}
 z\overline w} ,  $$
that~is, onto the standard Fock (Segal-Bargmann) space
$$ \FFe = L^2\hol(\CC, d\mu_\epsilon)   $$
of~all entire functions on $\CC$ square-integrable with respect to the
Gaussian measure
$$ d\mu_\epsilon(z) := e^{-2\epsilon|z|^2/(1-\epsilon)}\,dz,   $$
where $dz$ stands for the Lebesgue area measure on~$\CC$.
This can also be checked directly, using the orthogonality relation
\[ \intC H_n(z) \overline{H_m(z)} \; e^{-\frac{2\epsilon}{1+\epsilon} x^2
 -\frac{2\epsilon}{1-\epsilon} y^2} \; dx \, dy =
 \frac{\sqrt{1-\epsilon^2}}{2\epsilon} \, n!2^n\pi \epsilon^{-n} \delta_{mn},
 \qquad z=x+yi,   \label{tVI}  \]
which can be verified using the generating function for~$H_n$, and which
implies that the orthonormal basis $\{\epsilon^{n/2} (n!2^n\sqrt\pi)^{-1/2}
H_n(z)\}_{n=0}^\infty$ of $\HHe$ is (taking $z$ complex) also an orthonormal
basis in $L^2\hol(\CC,\frac{2\epsilon}{\sqrt{(1-\epsilon^2)\pi}}
e^{-\frac{2\epsilon}{1+\epsilon} x^2 -\frac{2\epsilon}{1-\epsilon} y^2}
\,dx\,dy)$; see~\cite{AliKr}.

Correspondingly,
$$ E_n(z) := \frac{\epsilon^{n/2}}{\sqrt{n!2^n\pi^{1/2}}}
 \frac{\sqrt{2\epsilon}}{\root4\of{(1-\epsilon^2)\pi}} H_n(z)
 e^{\epsilon^2 z^2/(1-\epsilon^2)}, \qquad n=0,1,2,\dots,   $$
form an orthonormal basis in~$\FFe$. The~operator
\[ V: f \longmapsto \sum_n \spr{f,h_n}E_n   \label{tVZ}   \]
taking each $h_n$ into $E_n$ is thus a unitary map of $L^2(\RR)$ onto~$\FFe$,
which is a ``Hermite'' analogue of the Bargmann transform. Explicitly,
\[ Vf(z) = \intR f(y) \beta(z,y) \, dy ,  \label{tVA}   \]
where
\begin{align}
\beta(z,y) &= \sum_n h_n(y) E_n(z) \nonumber  \\
&= \frac{\sqrt{2\epsilon}}{\root4\of{(1-\epsilon^2)\pi}}
 e^{\frac{\epsilon^2 z^2}{1-\epsilon^2}-\frac{y^2}2}
 K_{\sqrt\epsilon}(z,y)  \nonumber  \\
&= \frac{\sqrt{2\epsilon}}{(1-\epsilon^2)^{1/4}(1-\epsilon)^{1/2}\pi^{3/4}}
 e^{-\frac\epsilon{1-\epsilon^2}z^2 - \frac{1+\epsilon}{2(1-\epsilon)}y^2
    +\frac{2\sqrt\epsilon}{1-\epsilon} zy} .  \label{tVD}
\end{align}
Using the isomorpism~$V$, one~can transfer operators on $\FFe$ into those
on~$L^2(\RR)$. This applies, in~particular, also to the Toeplitz
operators~$T_\phi$, $\phi\in L^\infty(\CC)$, on~$\FFe$,
recalled in the Introduction. From the definition
$$ \spr{T_\phi f,g}_{\FFe} = \intC \phi f \overline g \,d\mu_\epsilon,
 \qquad f,g\in\FFe,   $$
using~(\ref{tVA}) one obtains for the transferred operator $V^*T_\phi V$
on~$L^2(\RR)$
\[ V^* T_\phi Vf(x) = \intR f(y) k_\phi(x,y) \, dy   \label{tVC}  \]
where
$$ k_\phi(x,y) = \intC \beta(z,y) \overline{\beta(z,x)} \phi(z)
\,d\mu_\epsilon(z) .   $$
Recall that the Weyl operator on $L^2(\RR)$ with symbol $a(x,\xi)$,
$x,\xi\in\RR$, is~defined~by
$$ W_a f(x) = \intR \intR a(\tfrac{x+y}2,\xi) e^{i(x-y)\xi} f(y)
 \,dy \,\frac{d\xi}{2\pi} ,   $$
where the right-hand side exists as a convergent integral~for, say, $a$~and
$f$ in the Schwarz space, and in general extends to be well-defined as an
oscillatory integral for more general functions or even distributions $f$
on $\RR$ and $a$ on~$\RR^2$; see e.g.~\cite{Foll}. Performing the $\xi$
integration yields
\[ W_a f(x) = \intR \check a(\tfrac{x+y}2,x-y) \, f(y) \, dy, \label{tVB}  \]
where $\check{\;}$ denotes the inverse Fourier transform with respect to the
second variable.

\begin{theorem}
We~have $V^*T_\phi V=W_a$, where
\[ a(x,\xi) = \Big(e^{\frac{1-\epsilon^2}{16\epsilon}\Delta}\phi\Big)
 \Big(\frac{1+\epsilon}{2\sqrt\epsilon}x
    - \frac{1-\epsilon}{2\sqrt\epsilon}i\xi \Big),
  \qquad x,\xi\in\RR.   \label{tVE}   \]
Here $e^{t\Delta}$, $t>0$, denotes the standard heat solution operator
$$ e^{t\Delta}\phi(w) = \frac1{4\pi t} \intC \phi(z) e^{-|z-w|^2/(4t)}\,dt. $$
\end{theorem}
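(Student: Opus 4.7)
The plan is to combine the integral representation (\ref{tVC}) for $V^*T_\phi V$ with the Weyl-operator kernel formula (\ref{tVB}): the identity $V^*T_\phi V=W_a$ is equivalent to
$$k_\phi(x,y)=\check a\Bigl(\tfrac{x+y}{2},\,x-y\Bigr),$$
so, after Fourier-transforming in the off-diagonal Jacobi variable, it suffices to verify that
$$\intR k_\phi\bigl(u+\tfrac{v}{2},\,u-\tfrac{v}{2}\bigr)\,e^{-iv\xi}\,dv$$
coincides with the right-hand side of (\ref{tVE}). I~would carry this out as a single direct Gaussian computation, not invoking any general Weyl--Wick intertwining machinery.

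The core calculation is the following. Substitute the explicit formula (\ref{tVD}) for $\beta$ into $k_\phi(x,y)=\intC\beta(z,y)\overline{\beta(z,x)}\,\phi(z)\,d\mu_\epsilon(z)$; writing $z=p+iq$, the integrand factors as $\phi(z)$ times the exponential of an (inhomogeneous) real-quadratic form in $(p,q,x,y)$. After the substitution $u=\tfrac{x+y}{2}$, $v=x-y$, the $(x,y)$ cross term vanishes, the imaginary part of the exponent reduces to $-i\tfrac{2\sqrt\epsilon}{1-\epsilon}qv$, and the pure $v$-dependence is a one-dimensional Gaussian $e^{-\frac{1+\epsilon}{4(1-\epsilon)}v^{2}}$. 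The $v$-integration against $e^{-iv\xi}$ is therefore an elementary Gaussian integral, producing a constant factor proportional to $\sqrt{(1-\epsilon)/(1+\epsilon)}$ and the additional exponent $-\tfrac{1-\epsilon}{1+\epsilon}\bigl(\xi+\tfrac{2\sqrt\epsilon}{1-\epsilon}q\bigr)^{2}$.

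Expanding this last square then triggers two pleasant cancellations. First, the $q^{2}$ coefficient---which is exactly zero before the $v$-integration, since $\alpha_{1}(z^{2}+\overline z^{2})$ and the weight $e^{-\gamma|z|^{2}}$ cancel in the $q^{2}$ direction---becomes $-\tfrac{4\epsilon}{1-\epsilon^{2}}$, matching the $p^{2}$ coefficient, so the exponent in $z$ is finally a genuine two-dimensional Gaussian. Second, after completing the squares in $p$ and~$q$ the resulting $u^{2}$ and $\xi^{2}$ residues cancel exactly the earlier $u^{2}$ factor from the $(x,y)$-quadratic and the $\xi^{2}$ term just produced. What survives is
$$\intC\phi(z)\,\exp\!\Bigl(-\tfrac{4\epsilon}{1-\epsilon^{2}}|z-w_{0}|^{2}\Bigr)\,dp\,dq=\frac{\pi(1-\epsilon^{2})}{4\epsilon}\Bigl(e^{\frac{1-\epsilon^{2}}{16\epsilon}\Delta}\phi\Bigr)(w_{0}),$$
by the explicit heat-kernel formula stated in the theorem, with
$$w_{0}=\frac{1+\epsilon}{2\sqrt\epsilon}\,u-i\,\frac{1-\epsilon}{2\sqrt\epsilon}\,\xi.$$
A~routine verification shows that the accumulated prefactor (coming from $A^{2}$, from the $v$-Gaussian, and from the heat-kernel normalisation $\pi(1-\epsilon^{2})/(4\epsilon)$) collapses to~$1$, which is exactly~(\ref{tVE}).

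The main obstacle will be bookkeeping: tracking all the algebraic constants and, above all, verifying the two ``miraculous'' cancellations just described (the $q^{2}$ coefficient becoming $-4\epsilon/(1-\epsilon^{2})$ and the $u^{2},\xi^{2}$ residues disappearing)---these are the only places where the precise form of $\beta$ and $d\mu_\epsilon$ is really used. A~secondary technical point is that before integrating in~$v$ there is no Gaussian decay in~$q$, so Fubini does not immediately apply to an arbitrary $\phi\in L^{\infty}(\CC)$; I~would handle this by first assuming $\phi\in\mathcal{S}(\CC)$---where all the interchanges are automatic---and then extending to bounded~$\phi$ by a standard density argument, or alternatively by interpreting the identity of integral kernels in the sense of tempered distributions.
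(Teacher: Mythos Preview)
Your approach is essentially identical to the paper's: both identify $a$ by Fourier-transforming $k_\phi(s+\tfrac r2,s-\tfrac r2)$ in the off-diagonal variable and then carry out the resulting Gaussian $r$- (your~$v$-) integration using the explicit form~(\ref{tVD}) of~$\beta$, arriving at the heat-kernel integral over~$\CC$. Your write-up simply unpacks the phrase ``after some calculations'' in the paper and, as a bonus, flags the Fubini issue (no $q$-decay before the $v$-integration) that the paper leaves implicit.
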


\begin{proof} Comparing (\ref{tVC}) and (\ref{tVB}) we see that
$V^*T_\phi V=W_a$ where $\check a(\frac{x+y}2,x-y)=k_\phi(x,y)$,
or $\check a(s,r)=k_\phi(s+\frac r2,s-\frac r2)$, or
$$ a(s,\eta) = \intR \intC e^{-ir\eta} \beta(z,s-\tfrac r2)
 \overline{\beta(z,s+\tfrac r2)} \, \phi(z) \,d\mu_\epsilon(z) \, dr.  $$
Substituting (\ref{tVD}) for $\beta$ and carrying out the $r$ integration
yields after some calculations
$$ a(s,\eta) = \frac{4\epsilon}{(1-\epsilon^2)\pi} \intC \phi(z)
 e^{-\frac{[(1+\epsilon)s-2\sqrt\epsilon z_1]^2
 +[(1-\epsilon)\eta+2\sqrt\epsilon z_2]^2} {1-\epsilon^2}} \, dz ,
 \qquad z=z_1+i z_2,   $$
which is~(\ref{tVE}).   \end{proof}

Using the standard properties of the Weyl calculus and the last theorem,
it~is possible to recover the semi-classical asymptotics~(\ref{tTE}),
mentioned in the Introduction, for the Toeplitz operators on the Fock
space~$\FFe$. Namely, assume that a symbol $a$ lies in the Shubin
(or~Grossmann-Loupias-Stein) class $GLS^m(\RR^2)$, $m\le0$, that~is,
$$ \sup_{x,\xi\in\RR} \frac{|\partial_x^j\partial_\xi^k a(x,\xi)|}
 {(1+|x|+|\xi|)^{m-j-k}} < \infty  \qquad\forall j,k=0,1,2,\dots,  $$
and let similarly $b\in GLS^n$, $n\le0$. Then it is known that
$W_aW_b=W_c$ for a unique $c\in GLS^{m+n}$, and furthermore
$c=:a\#b$ has asymptotic expansion
\[ (a\#b)(x,\xi) \sim \sum_{k=0}^\infty \frac{(i/2)^k}{k!}
 (\partial_x\partial_\eta-\partial_\xi\partial_y)^k
 a(x,\xi) b(y,\eta) \Big| _{y=x,\eta=\xi},  \label{tVF}  \]
where ``$\sim$'' means that the left-hand side differs from the partial
sum of the first $N$ terms on the right-hand side by an element
from~$GLS^{m+n-2N}$, for all $N=0,1,2,\dots$. Also, for $a\in GLS^m$,
$m\le0$, and any $t>0$, one~has
\[ e^{t\Delta}a \sim \sum_{k=0}^\infty \frac{t^k}{k!} \Delta^k a;
 \label{tVH}  \]
see e.g.~\cite[Theorem~3.1 and \S7.4]{BEY}. In~particular, this holds for
$t=\frac{1-\epsilon^2}{16\epsilon}$; note that then the last formula,
in~addition to holding in the same sense as in~(\ref{tVF}) above,
at~the same time also represents an asymptotic expansion of $e^{t\Delta}a$
as $\epsilon\nearrow1$ in increasing powers of $(1-\epsilon)$.
Introducing momentarily the shorthands
$$ \tau_\epsilon:=e^{\frac{1-\epsilon^2}{16\epsilon}\Delta}, \quad
 \kappa_\epsilon \phi(x,\xi) := \phi(\tfrac{1+\epsilon}{2\sqrt\epsilon}x,
 -\tfrac{1-\epsilon}{2\sqrt\epsilon}\xi),   $$
we~thus get for any $\phi\in GLS^m$, $\psi\in GLS^n$, $m,n\le0$,
$$ V^*(T_\phi T_\psi-T_{\phi\psi})V = W_a  $$
where
\begin{align}
a &= (\kappa_\epsilon \tau_\epsilon\phi)
 \# (\kappa_\epsilon \tau_\epsilon \psi)
 - \kappa_\epsilon \tau_\epsilon(\phi\psi)  \nonumber \\
& \begin{aligned}
&\sim \sum_{j,k,l=0}^\infty \frac{(i/2)^k t^{j+l}}{j!k!l!}
 (\partial_{x,\phi}\partial_{\xi,\psi}-\partial_{\xi,\phi}\partial_{x,\psi})^k
 (\kappa_\epsilon \Delta^j\phi) (\kappa_\epsilon \Delta^l\psi)   \\
&\hskip2em - \sum_{k=0}^\infty \frac{t^k}{k!}
 \kappa_\epsilon \Delta^k(\phi\psi) ,  \end{aligned}   \label{tVG}
\end{align}
where the subscripts in $\partial_{x,\phi}$, $\partial_{\xi,\psi}$,
$\partial_{\xi,\phi}$, $\partial_{x,\psi}$ indicate which of the
functions $\partial_x$ or $\partial_\xi$ applies~to,
and $t=\frac{1-\epsilon^2}{16\epsilon}$.
Observe that each $\partial_\xi$ picks from $\kappa_\epsilon$ a~factor
of~$(1-\epsilon)$, so~that the last ``$\sim$'' again, in~addition to
holding in the same sense as in~(\ref{tVF}), is~also an asymptotic
expansion in descending powers of $(1-\epsilon)$ as $\epsilon\nearrow1$.
(Note that $\kappa_\epsilon$ evidently maps each $GLS^m$ into itself.)

Since $\kappa_\epsilon(\phi\psi)=(\kappa_\epsilon\phi)(\kappa_\epsilon\psi)$,
the~top order terms in the two sums in (\ref{tVG}) cancel~out. The~terms with
$j+k+l=1$ in the first sum and the term $k=1$ of the second sum combine into
\begin{align*}
& t\kappa_\epsilon\Delta\phi \cdot \kappa_\epsilon\psi
 + t\kappa_\epsilon\phi \cdot \kappa_\epsilon\Delta\psi
 + \tfrac i2 (\partial_x \kappa_\epsilon\phi
              \cdot \partial_\xi \kappa_\epsilon\psi
            - \partial_\xi \kappa_\epsilon\phi
              \cdot \partial_x \kappa_\epsilon\psi)
 - t\kappa_\epsilon\Delta(\phi\psi)  \\
&\hskip4em =
 -2t\kappa_\epsilon [(\partial_x\phi-i\partial_\xi\phi)
   (\partial_x\psi+i\partial_\xi\psi)] .
\end{align*}
Thus, appealing one more time to~(\ref{tVH}),
$$ a = -2t\tau_\epsilon \kappa_\epsilon [(\partial_x\phi-i\partial_\xi\phi)
   (\partial_x\psi+i\partial_\xi\psi)] + b  $$
where $b\in GLS^{m+n-4}$ and also $b=O(t^2)$ as $t\searrow0$,
i.e.~$\epsilon\nearrow1$. Back~on the level of~$T_\phi$, this amount~to
$$ T_\phi T_\psi-T_{\phi\psi} = -2t T_{(\partial_x\phi-i\partial_\xi\phi)
   (\partial_x\psi+i\partial_\xi\psi)} + O(t^2),   $$
and upon interchanging $\phi,\psi$ and subtracting,
$$ T_\phi T_\psi - T_\psi T_\phi = \frac{ih}{2\pi} T_{\{f,g\}} + O(h^2)  $$
with the Poisson bracket $\{\phi,\psi\}=\partial_\xi\phi \partial_x\psi -
\partial_x\phi \partial_\xi\psi$ and Planck's constant
$$ h = \frac{1-\epsilon^2}{2\epsilon}\pi,   $$
thus recovering~(\ref{tTA}).

Using the further terms in~(\ref{tVG}), it~is plain how to recover the
complete semiclassical expansion (\ref{tTE}) as~well.

We~conclude by remarking that analogously to~(\ref{tVZ}), we~also have the
unitary map
$$ U: f \longmapsto \sum_n \Big(\frac{2\epsilon}{1-\epsilon}\Big)^{(n+1)/2}
\Big\langle f,\frac{z^n}{\sqrt{n!\pi}} \Big\rangle E_n(z)   $$
in $\FFe$ sending the standard monomial orthonormal basis $\{(\frac{2\epsilon}
{(1-\epsilon)})^{(n+1)/2}(n!\pi)^{-1/2}z^n\}_n$ into~$E_n$; thus $VU^*$ is the
usual Bargmann transform of $L^2(\RR)$ onto~$\FFe$. Explicitly,
$$ U f(w) = \frac{2\epsilon}{(1-\epsilon^2)^{3/4}\pi}
 \intC f(z) e^{\frac{2\epsilon w\oz}{\sqrt{1-\epsilon^2}} \,
  -\frac{\epsilon^2}{1-\epsilon^2}(\oz^2-w^2)} \, d\mu_\epsilon(w) ,  $$
which can also be written, using the reproducing kernel property,
$$ U = T_{1/\psi} \delta_{1/\sqrt{1-\epsilon^2}} T^*_{\psi},
 \qquad \psi(z) = e^{-\epsilon^2 z^2/(1-\epsilon^2)},  $$
as~a~product of two Toeplitz operators and a dilation.

\medskip  
Let us conclude by making a conjecture. The lack of an obvious physical
interpretation for the results obtained by our ``semiclassical analysis''
above, might be a reflection of the fact that an underlying localization
property of the quantized system is absent here. As is well known, when the
reproducing kernel is a subspace of an $L^2$-space, there exists a family of
localization operators and a positive operator valued measure which define the
localization properties of the quantum system in $\Omega$. In the absence of
such an ambient space, no such measure is available and hence no obvious sense
in which the quantum system is localized in $\Omega$. 
In~any case, the authors find the application of orthogonal polynomials
to the construction of the associated reproducing kernel spaces and operators
on~them a rather charming figment of complex analysis, and hope very much to
have at least partly conveyed this feeling to the reader as well.

\end{document}